\DeclareMathOperator{\overlap}{overlap}
\tikzstyle{hgedge}=[->,gray!40!white]
\tikzstyle{anypath}=[->,dashed]
\tikzstyle{vertex}=[draw,ellipse,inner sep=0.5mm,minimum size=4mm]
\tikzstyle{inputvertex}=[draw,rectangle,inner sep=.5mm,minimum size=5mm]
\newenvironment{mypic}{\begin{center}\begin{tikzpicture}[>=latex,line width=.3mm,scale=0.8,transform shape]}{\end{tikzpicture}\end{center}}
\title{All instantiations of the greedy algorithm for the shortest superstring problem are equivalent} 
\titlerunning{All instantiations of the greedy algorithm for SCS are equivalent} 
\author{Maksim S. Nikolaev}{Steklov Institute of Mathematics at St.~Petersburg, Russian Academy of Sciences}{makc-nicko@yandex.ru}{https://orcid.org/0000-0003-4079-2885}{}
\authorrunning{M.\,S. Nikolaev} 
\keywords{superstring, shortest common superstring, approximation, greedy algorithms, greedy conjecture} 
\begin{document}
    
    \maketitle
    
    \begin{abstract}
        In the Shortest Common Superstring problem (SCS), one needs to find the shortest superstring
        for a set of strings. While SCS is NP-hard and MAX-SNP-hard, the Greedy Algorithm ``choose
        two strings with the largest overlap; merge them; repeat'' achieves a constant factor approximation
        that is known to be at most 3.5 and conjectured to be equal to 2. The Greedy Algorithm is not
        deterministic, so its instantiations with different tie-breaking rules may have different approximation
        factors. In this paper, we show that it is not the case: all factors are equal. To prove this, we show
        how to transform a set of strings so that all overlaps are different whereas their ratios stay roughly
        the same.
        
        We also reveal connections between the original version of SCS and the following one: find a~superstring minimizing the number of occurrences of a given symbol. It~turns out that the latter problem is equivalent to the original one.
    \end{abstract}
    
    \section{Introduction}
    \label{sec:intro}
        In the Shortest Common Superstring problem (SCS), one is given a set of strings and needs to find the shortest string that contains each of them as a~substring. Applications of this problem include genome assembly~\cite{waterman1995introduction, pevzner2001eulerian} and data compression~\cite{GMS1980, phdthesis, storer1987data}. We refer the reader to the surveys~\cite{gevezes2014shortest, mucha2007tutorial} for an overview of~SCS as well as its applications and algorithms.
        
        While SCS is NP-hard~\cite{GMS1980} and even MAX-SNP-hard~\cite{BJLTY1991}, the Greedy Algorithm (GA) ``choose two strings with the largest overlap; merge them; repeat'' achieves a constant factor approximation that is proven to be less than or equal to 3.5~\cite{KS2005}. This factor is at least 2 (consider a dataset $\mathcal{S}= \{ c(ab)^n, (ab)^nc, (ba)^n \}$) and the $30$ years old \emph{Greedy Conjecture}~\cite{storer1987data, TU1988, T1989, BJLTY1991} claims that this bound is accurate, that is, that GA is 2-approximate.
        
        GA is not deterministic as we do not specify how to break ties in case when there are many pairs of strings with maximum overlap. For this reason, different instantiations of GA may produce different superstrings for the same input and hence they may have different approximation factors. In fact, if $\mathcal{S}$ contains only strings of length 2 or less or if $\mathcal{S}$ is a set of $k$-substrings of an unknown string, then there are instantiations of GA~\cite{GMKLN2019}, that find \emph{the exact} solution, whereas in general GA fails to do so.
        
        The original Greedy Conjecture states that \emph{any} instantiation of GA is 2-approximate. As this is still widely open, it is natural to try to prove the conjecture at least for \emph{some} instantiations. This could potentially be easier not just because this is a weaker statement, but also because a particular instantiation of GA may decide how to break ties by asking an~almighty oracle. In this paper, we show that this weak form of Greedy Conjecture is in fact equivalent to the original one. More precisely, we show, that if \emph{some} instantiation of GA is~$\lambda$-approximate, then \emph{all} instantiations are $\lambda$-approximate.
        
        To prove this, we introduce the so-called \emph{Disturbing Procedure}, that, for a given dataset $\mathcal{S} = \{s_1, \dots, s_n\}$, a parameter $m \gg n$, and a sequence of greedy non-trivial merges (merges of strings with a non-empty overlap), constructs a new dataset $\mathcal{S}' = \{s'_1, \dots, s'_n\}$, such that, for all $i\neq j$, $s_i'$ is roughly $m$ times longer than $s_i$, the overlap of $s'_i$ and $s'_j$ is roughly $m$ times longer than the overlap of $s_i$ and $s_j$, and the mentioned greedy sequence of non-trivial merges for $\mathcal{S}$ is \emph{the only} such sequence for $\mathcal{S}'$.
        
        We also find the following curious relation between SCS and its version, where one needs
        to find a superstring with the smallest number of occurrences of a given symbol: if there is a~$\lambda$-approximate algorithm for the latter problem, then there is a~$\lambda$-approximate algorithm
        for the former one, and vice versa.
    
    \section{Preliminaries}
    
        Let $|s|$ be the length of a string $s$ and $\overlap(s, t)$ be the \emph{overlap} of strings $s$ and $t$, that is, the longest string $y$, such that $s = xy$ and $t = yz$. In this notation, a string $xyz$ is \emph{a merge} of strings $s$ and $t$. By $\varepsilon$ we denote the empty string. By $\mathrm{OPT}(\mathcal{S})$ we denote the optimal superstring for the dataset $\mathcal{S}$.
        
        Without loss of generality we may assume that the set of input strings $\mathcal{S}$ contains no~string that is a substring of another. This assumption implies that in any superstring all strings occur in some order: if one string begins before another, then it also ends before. Hence, we can consider only superstrings that can be obtained from some permutation $(s_{\sigma(1)}, \dots, s_{\sigma(n)})$ of $\mathcal{S}$ after merging adjacent strings. The length of such superstring $s(\sigma)$ is~simply
        \begin{gather}
            \label{length}
            |s(\sigma)| = \sum_{i=1}^n |s_i| - \sum_{i=1}^{n-1}|\overlap(s_{\sigma(i)}, s_{\sigma(i+1)})|.
        \end{gather}
        
        Let $A$ be an instantiation of GA (we denote this by $A \in \mathrm{GA}$). By $\sigma_A$ we denote the permutation corresponding to a superstring $A(\mathcal{S})$ constructed by $A$, and by $(l_A(1), r_A(1)), \dots, \\(l_A(n-1), r_A(n-1))$, we denote the order of merges: strings $s_{l_A(i)}$ and $s_{r_A(i)}$ are merged at step $i$. By the definition of GA we have
        \begin{gather*}
            |\overlap(s_{l_A(i)}, s_{r_A(i)})| \geq |\overlap(s_{l_A(j)}, s_{r_A(j)})|, \quad \forall\, i < j < n,
        \end{gather*}
        and if, for some $i$, $|\overlap(s_{l_A(i)}, s_{r_A(i)})| = 0$, then the same holds for any $i' > i$. We denote the first such $i$ by $T_A$ and this is the first trivial merge (that is, one with the empty overlap), after which all the merges are trivial. Note that just before step $T_A$, all the remaining strings have empty overlaps, so the resulting superstring is just a concatenation of them in some order and this order does not affect the length of the result.
        
        \begin{figure}[ht]
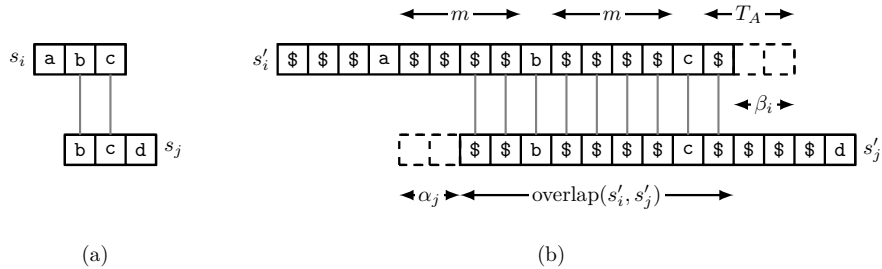

            \begin{mypic}
                \begin{scope}
                    
                    \draw (0,2) rectangle (1.5,2.5);
                    \draw[step=5mm] (0,2) grid (1.5,2.5);
                    \node[left] at (0,2.25) {$s_i$};
                    \draw (0.5,0.5) rectangle (2,1);
                    \draw[step=5mm] (0.5,0.5) grid (2,1);
                    \node[right] at (2,0.75) {$s_j$};
                    
                    \foreach \x in {0.75, 1.25}
                    \draw[gray,thick] (\x,2) -- (\x,1);
                    
                    \foreach \x/\a in {0/a, 0.5/b, 1/c}
                    \node at (\x+0.25,2.25) {\texttt{\a}};
                    \foreach \x/\a in {0.5/b, 1/c, 1.5/d}
                    \node at (\x+0.25,0.75) {\texttt{\a}};

                    \tikzmath{\shift = 4; }
                    
                    \draw (0+\shift,2) rectangle (7.5+\shift,2.5);
                    \draw[step=5mm] (0+\shift,2) grid (7.5+\shift,2.5);
                    \node[left] at (0+\shift,2.25) {$s'_i$};
                    \draw (3+\shift,0.5) rectangle (9.5+\shift,1);
                    \draw[step=5mm] (3+\shift,0.5) grid (9.5+\shift,1);
                    \node[right] at (9.5+\shift,0.75) {$s'_j$};
                    
                    \draw[dashed] (2+\shift,0.5) rectangle (3+\shift,1);
                    \draw[step=5mm, dashed] (2+\shift,0.5) grid (3+\shift,1);
                    \draw[dashed] (7.5+\shift,2) rectangle (8.5+\shift,2.5);
                    \draw[step=5mm, dashed] (7.5+\shift,2) grid (8.5+\shift,2.5);
                    
                    \foreach \x in {3.25, 3.75, ..., 7.25}
                    \draw[gray,thick] (\x+\shift,2) -- (\x+\shift,1);
                    
                    \foreach \f/\t/\y/\lab in {2/3/0/{\alpha_j}, 4.5/6.5/3/m, 2/4/3/m,
                        7/8.5/3/T_A, 3/7.5/0/{\overlap(s'_i, s'_j)}, 7.5/8.5/1.5/\beta_i}
                    \path (\f+\shift,\y) edge[<->] node[rectangle,inner sep=0.5mm,fill=white] {\strut $\lab$} (\t+\shift,\y);
                    
                    \foreach \x/\a in {0/\$, 0.5/\$, 1/\$, 1.5/a, 2/\$, 2.5/\$, 3/\$, 3.5/\$, 4/b, 4.5/\$, 5/\$, 5.5/\$, 6/\$, 6.5/c, 7/\$}
                    \node at (\x+0.25+\shift,2.25) {\texttt{\a}};
                    \foreach \x/\a in {3/\$, 3.5/\$, 4/b, 4.5/\$, 5/\$, 5.5/\$, 6/\$, 6.5/c, 7/\$, 7.5/\$, 8/\$, 8.5/\$, 9/d}
                    \node at (\x+0.25+\shift,0.75) {\texttt{\a}};
                    
                    \node at (1,-1) {(a)};
                    \node at (4.5+\shift,-1) {(b)};
                \end{scope}
            \end{mypic}
            \caption{(a)~strings $s_i$ and $s_j$ from $\mathcal{S}$. (b)~the~resulting strings $s'_i$ and $s'_j$ after disturbing; here, $m = 4$, $T_A = 3$, $\alpha_i = 1$, $\beta_i = 2$, $\alpha_j = 2$ and $\beta_j = T_A$; since $\alpha_j = \beta_i = 2$, we may conclude that $s_i$ and $s_j$ were merged by~$A$ at~step~2. }
            \label{fig:proc}
        \end{figure}

    \section{Disturbing Procedure}
    
        Here, we describe the mentioned procedure that gets rid of ties. Consider a dataset $\mathcal{S}$, an~instantiation $A \in \mathrm{GA}$ and \emph{a sentinel} $\$$ --- a symbol that does not occur in $\mathcal{S}$, and a~parameter $m$ whose value will be determined later. For every string $s_i = c_1c_2\dots c_{n_i} \in \mathcal{S}$ define a string
        \begin{gather}
            s'_i = \$^{m - \alpha_i}c_1 \$^m c_2 \$^m c_3 \$^m \dots \$^m c_{n_i} \$^{T_A - \beta_i},
        \end{gather}
        where
        \begin{enumerate}
            \item $\alpha_i$ is the number of step such that $r_{\alpha_i} = i$, if such step exists and is less than $T_A$, and $\alpha_i = T_A$ otherwise; note that if $\alpha_i < T_A$ then $s_i$ is \emph{the right} part of a non-trivial merge at step $\alpha_i$;
            \item $\beta_i$ is the number of step such that $l_{\beta_i} = i$, if such step exists and is less than $T_A$, and $\beta_i = T_A$ otherwise; note that if $\beta_i < T_A$ then $s_i$ is \emph{the left} part of a non-trivial merge at step $\beta_i$.
        \end{enumerate}
        Basically, we insert the string $\$^m$ before every character of $s_i$ and then remove some $\$$'s from the beginning of the string and add some $\$$'s to its end (see Fig.~\ref{fig:proc}). The purpose of this removal and addition is to disturb slightly overlaps of equal length, so there are no longer any ties in non-trivial merges.
        
        We denote the resulting set of disturbed strings $\{s'_1, \dots, s'_n\}$ by $\mathcal{S}'$, and all entities related to this dataset we denote by adding a prime (for example, $\sigma'_A$). Let us derive some properties of $\mathcal{S}'$.
        \begin{lemma}
            \label{lem:1}
            For all $i \neq j$, $k \neq l$
            \begin{enumerate}
                \item if $|\overlap(s_i, s_j)| = k > 0$, then $|\overlap(s'_i, s'_j)| = (m + 1)k - \alpha_j + T_A - \beta_i$;
                \item if $|\overlap(s_i, s_j)| = 0$, then $|\overlap(s'_i, s'_j)| = \min\{ T_A - \beta_i, m - \alpha_j\}$;
                \item if $m > 2n$, then disturbing preserves order on overlaps of different lengths, that is, if $|\overlap(s_i, s_j)| > |\overlap(s_k, s_l)|$, then $|\overlap(s'_i, s'_j)| > |\overlap(s'_k, s'_l)|$.
            \end{enumerate}
        \end{lemma}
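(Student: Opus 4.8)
The plan is to analyze the fine structure of the disturbed strings, exploiting that the sentinel $\$$ does not occur in $\mathcal{S}$. Consequently, in any alignment witnessing an overlap of $s'_i$ and $s'_j$, a $\$$ can only be matched against a $\$$ and an original character only against an equal one. Moreover, inside every $s'_i$ the original characters sit on a rigid lattice: consecutive characters are separated by exactly $\$^m$, while the only shorter gaps are the prefix $\$^{m-\alpha_i}$ and the suffix $\$^{T_A-\beta_i}$, both of length strictly below $m$ because $\alpha_i,\beta_i\ge 1$ and $T_A\le n<m$. This rigidity, together with the fact that an overlap must be a suffix of $s'_i$ and a prefix of $s'_j$, is what pins every overlap down to a single admissible alignment.

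For part~1, the key observation is that the overlap begins at the start of $s'_j$ --- hence with the block $\$^{m-\alpha_j}$ followed by $d_1$, the first character of $s_j$ --- and ends at the end of $s'_i$ --- hence with $c_{n_i}$ followed by the trailing block $\$^{T_A-\beta_i}$. If the overlap contains any character, then $d_1$ is matched against some $c_t$ in $s'_i$, and reading off the interior gaps (each exactly $\$^m$) forces $d_2=c_{t+1},\,d_3=c_{t+2},\dots$ up to the last character $c_{n_i}$. Thus the matched characters are precisely a suffix $c_t\cdots c_{n_i}$ of $s_i$ that equals a prefix of $s_j$, whose length is at most $k$ by the definition of $\overlap(s_i,s_j)$ and is maximized at $k$, i.e.\ at $t=n_i-k+1$; since the no-substring assumption gives $k\le n_i-1$ we have $t\ge 2$, so the dollar block preceding $c_t$ inside $s'_i$ is a full interior block of length $m\ge m-\alpha_j$, leaving room for the leading sentinels. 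Summing the $m-\alpha_j$ leading sentinels, the $k$ characters with their $k-1$ interior blocks $\$^m$, and the $T_A-\beta_i$ trailing sentinels gives $(m+1)k-\alpha_j+T_A-\beta_i$; a one-line check that this exceeds any character-free ($\$$-only) overlap confirms it is the maximum.

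For part~2, the same character-matching argument shows that a character-bearing overlap of $s'_i$ and $s'_j$ would exhibit a nonempty common suffix/prefix of $s_i$ and $s_j$, contradicting $\overlap(s_i,s_j)=0$. Hence $\overlap(s'_i,s'_j)$ is a pure block of sentinels, bounded above as a suffix of $s'_i$ by its trailing run $\$^{T_A-\beta_i}$ and as a prefix of $s'_j$ by its leading run $\$^{m-\alpha_j}$; the common maximum is $\min\{T_A-\beta_i,\,m-\alpha_j\}$, and it is attained.

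For part~3, I would substitute the formulas from parts~1 and~2 and compare. Writing $k_1=|\overlap(s_i,s_j)|>k_2=|\overlap(s_k,s_l)|$, the dominant terms are $(m+1)k_1$ against $(m+1)k_2$, while every correction term (the $\alpha$'s, $\beta$'s, and pure-$\$$ values) is bounded in absolute value by $T_A\le n$. When both overlaps are positive the gap is $(m+1)(k_1-k_2)\ge m+1$ against a total correction of at most $2T_A-2<m$ once $m>2n$; when $k_2=0$ the right-hand overlap is at most $n$ while the left-hand one is at least $m+1-T_A>n$. Either way the strict inequality survives. The main obstacle is the rigidity argument underlying parts~1 and~2: one must verify carefully that no alternative alignment --- shifting the character lattice or starting the overlap inside a different dollar block --- yields a longer or differently shaped overlap, and this is exactly where the hypotheses $\$\notin\mathcal{S}$ and (for the order claim) $m>2n$, together with $T_A\le n$, do the work.
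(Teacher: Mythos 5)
Your proof is correct and takes essentially the same route as the paper: it exhibits the candidate string $\$^{m-\alpha_j}c_1\$^m\cdots\$^m c_k\$^{T_A-\beta_i}$, computes its length, and, for part~3, compares the dominant terms $(m+1)k$ against correction terms bounded by $T_A\le n$, exactly as the paper does. The paper dismisses the maximality of this candidate with ``clearly,'' whereas your lattice-rigidity argument actually proves it; the only step you leave implicit is the symmetric check $k\le n_j-1$ (again from the no-substring assumption), which guarantees room in $s'_j$ for the $T_A-\beta_i$ trailing sentinels after its $k$-th character.
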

        \begin{proof}
            Let $\overlap(s_i, s_j)$ be $c_1c_2\dots c_k$. Consider the string $u = \$^{m - \alpha_j}c_1 \$^m \dots \$^m c_k \$^{T_A - \beta_i}$. Clearly, $u$ is the overlap of $s'_i$ and $s'_j$ and $|u| = (m + 1)k - \alpha_j + T_A - \beta_i$. Also, that if $|\overlap(s_i, s_j)| = 0$ then $\overlap(s'_i, s'_j) = \$^{\min\{ T_A - \beta_i, m - \alpha_j\}}$.
            
            To prove the last statement, note that $\alpha_j + \beta_i \leq 2T_A < m$ and $|\overlap(s'_i, s'_j)| > (m + 1)|\overlap(s_k, s_l)| + T_A \geq |\overlap(s'_k, s'_l)|$.
        \end{proof}
    
        \begin{lemma}
            \label{lem:2}
            Let $B \in \mathrm{GA}$. Then $T_A = T'_A = T'_B$ and the first $T_A - 1$ merges are the same for both instantiations.
        \end{lemma}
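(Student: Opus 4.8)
The plan is to couple three greedy runs --- $A$ on $\mathcal{S}$ and two arbitrary instantiations on $\mathcal{S}'$ --- and to show by induction on the step index $t$ that, as long as $t < T_A$, the maximum overlap available in $\mathcal{S}'$ is attained by a \emph{unique} pair, namely the one $A$ merges at step $t$ on $\mathcal{S}$. Uniqueness leaves no room for tie-breaking, so every instantiation on $\mathcal{S}'$ is forced to reproduce $A$'s first $T_A-1$ merges verbatim; the remaining bookkeeping then pins down $T'_A = T'_B = T_A$.

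Before the induction I would record two facts. First, a \emph{separation of bands}: combining Lemma~\ref{lem:1}(1)--(2) with $m > 2n$, any pair with positive original overlap has transformed overlap at least $m+1-T_A > n$, whereas any pair with zero original overlap has transformed overlap at most $T_A - \beta_i \le n$; hence positive-overlap pairs always dominate zero-overlap pairs in $\mathcal{S}'$. Second, the \emph{localization} property of greedy merges under the no-substring assumption: the overlap of two composite strings depends only on the original endpoint strings at their boundary. Consequently, after identical merges the configuration on $\mathcal{S}'$ mirrors that on $\mathcal{S}$, and Lemma~\ref{lem:1} computes the length of every currently available overlap from its boundary endpoints. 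Establishing (or citing) this localization cleanly is, I expect, the main obstacle, since one must check that the $\$$-padding never lets a composite overlap exceed its boundary value.

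For the inductive step, assume the first $t-1$ merges on $\mathcal{S}'$ equal $A$'s first $t-1$ merges on $\mathcal{S}$, and let $t < T_A$. On $\mathcal{S}$, $A$ chooses a pair of maximal original overlap $k_t \ge 1$ whose boundary strings are $s_{l_A(t)}$ (left) and $s_{r_A(t)}$ (right); since $\beta_{l_A(t)} = \alpha_{r_A(t)} = t$, Lemma~\ref{lem:1}(1) gives its transformed length $(m+1)k_t + T_A - 2t$. Take any other available pair with boundary strings $s_a$ (the right endpoint of the left composite string) and $s_b$ (the left endpoint of the right one). If its original overlap is $0$, the band separation beats it; if it is positive but smaller than $k_t$, Lemma~\ref{lem:1}(3) makes it strictly shorter; and if it equals $k_t$, its transformed length is $(m+1)k_t + T_A - \alpha_b - \beta_a$, so I must verify $\alpha_b + \beta_a > 2t$. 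But $s_a$ is still a right endpoint at step $t$, hence has not yet been a left part of a non-trivial merge, so $\beta_a \ge t$; symmetrically $\alpha_b \ge t$; and simultaneous equality would force $a = l_A(t)$ and $b = r_A(t)$, i.e. exactly $A$'s pair. Thus $A$'s pair strictly dominates, the maximum is unique, and both instantiations on $\mathcal{S}'$ are compelled to perform this merge, closing the induction.

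It remains to identify the trivial step. After the common first $T_A - 1$ merges, every surviving pair on $\mathcal{S}$ has original overlap $0$, and each surviving right endpoint $s_a$ has never been the left part of a non-trivial merge, so $\beta_a = T_A$. By Lemma~\ref{lem:1}(2) its transformed overlap with any left endpoint $s_b$ equals $\min\{T_A - \beta_a,\, m - \alpha_b\} = 0$. Hence all overlaps in $\mathcal{S}'$ have vanished, the step-$T_A$ merge is trivial for both instantiations, and therefore $T_A = T'_A = T'_B$ with identical first $T_A - 1$ merges, as claimed.
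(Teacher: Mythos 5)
Your proof is correct and follows essentially the same route as the paper's: an induction on the merge steps in which the inequality $\alpha_b + \beta_a > 2t$ (with equality only for $A$'s own pair) shows that $A$'s step-$t$ pair is the \emph{unique} maximizer of the disturbed overlap, so every instantiation on $\mathcal{S}'$ is forced to follow it, after which all surviving overlaps vanish at step $T_A$ because $\beta_i = T_A$ for every surviving right endpoint. The only differences are presentational: you make explicit the zero-overlap band separation and the boundary-localization property of greedy merges, both of which the paper's proof uses implicitly.
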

        \begin{proof}
            We prove by induction that $l_A(t) = l'_A(t) = l'_B(t)$ and $r_A(t) = r'_A(t) = r'_B(t)$ for all $t < T_A$.
            
            Case $t = 1$. As $A$ is greedy, then $k_1 := |\overlap(s_{l_A(1)}, s_{r_A(1)})| \geq |\overlap(s_i, s_j)|$, for all $i \neq j$, $(i, j) \neq (l_A(1), r_A(1))$. Hence
            \begin{align*}
                |\overlap(s'_i, s'_j)| &\leq (m + 1)k_1 - \alpha_j + T_A - \beta_i < \\
                &< (m + 1)k_1 - 1 + T_A - 1 = |\overlap(s'_{l_A(1)}, s'_{r_A(1)})|,
            \end{align*}
            and $l'_A(1) = l'_B(1) = l_A(1)$ as well as $r'_A(1) = r'_B(1) = r_A(1)$.
            
            Suppose that the statement holds for all $t \leq t' < T_A - 1$. Note that at moment $t = t' + 1$ the sum $\alpha_j + \beta_i$ is strictly greater than $2t$ until $(i, j) = (l_A(t), r_A(t))$. Similarly to the base case, we have
            \begin{align*}
                |\overlap(s'_i, s'_j)| &\leq (m + 1)k_t - \alpha_j + T_A - \beta_i < \\
                &< (m + 1)k_t - t + T_A - t = |\overlap(s'_{l_A(t)}, s'_{r_A(t)})|,
            \end{align*}
            where $k_t = |\overlap(s_{l_A(t)}, s_{r_A(t)})|$, and the induction step is proven.
            
            Now note that starting from step $T_A$ all the remaining strings in $\mathcal{S}$ have empty overlaps and hence so do the remaining strings in $\mathcal{S}'$, as for all of them $\beta_i = T_A$ and the minimum in Lemma~\ref{lem:1}.2 is equal to zero. Thus, $T_A = T'_A = T'_B$ and the lemma is proven.
        \end{proof}
    
        \begin{corollary}
            \label{cor}
            As all non-trivial merges coincide, $|A(\mathcal{S}')| = |B(\mathcal{S}')|$.
        \end{corollary}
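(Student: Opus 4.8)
The plan is to read both lengths off the length formula~\eqref{length} and to observe that the only quantities that can differ between $A(\mathcal{S}')$ and $B(\mathcal{S}')$ are the overlaps consumed during the merges, all of which either coincide or vanish. First I would recall that, for any instantiation $C \in \mathrm{GA}$, the length of the produced superstring can be written as
\begin{gather*}
    |C(\mathcal{S}')| = \sum_{i=1}^{n} |s'_i| - \sum_{t=1}^{n-1} \bigl|\overlap\bigl(s'_{l'_C(t)}, s'_{r'_C(t)}\bigr)\bigr|,
\end{gather*}
which is just a reindexing of~\eqref{length} applied to the permutation $\sigma'_C$: the $n-1$ merges performed by $C$ create exactly the $n-1$ adjacencies of the final permutation, and the overlap consumed at the $t$-th merge is precisely the overlap of the two boundary strings $s'_{l'_C(t)}$ and $s'_{r'_C(t)}$ that become adjacent. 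Since the first sum depends only on $\mathcal{S}'$ and not on $C$, it suffices to compare the two sums of consumed overlaps.

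Next I would split the overlap sum at $t = T_A$. By Lemma~\ref{lem:2} we have $T'_A = T'_B = T_A$ and $l'_A(t) = l'_B(t)$, $r'_A(t) = r'_B(t)$ for every $t < T_A$, so the first $T_A - 1$ terms are identical for the two instantiations. For $t \geq T_A$ the merges are trivial for both $A$ and $B$, again by Lemma~\ref{lem:2}, since from step $T_A$ on all remaining strings of $\mathcal{S}'$ have empty overlaps; hence each of these terms equals $0$ for both instantiations. Adding up, the two overlap sums agree term by term, and therefore $|A(\mathcal{S}')| = |B(\mathcal{S}')|$.

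The argument is essentially bookkeeping, so I do not anticipate a real obstacle; the single point that needs care is the identification used in the first step, namely that the $t$-th greedy merge consumes exactly the overlap of one adjacent pair of the final permutation, and that these $n-1$ pairs are distinct and exhaust the adjacency sum in~\eqref{length}. This rests on the substring-free assumption on the input, inherited by $\mathcal{S}'$, which guarantees that the overlap of two superstrings is determined by their boundary strings, so that the greedy merge overlaps line up precisely with the adjacency overlaps counted by the length formula.
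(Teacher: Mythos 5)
Your proof is correct and matches the paper's (implicit) argument: the paper justifies the corollary exactly by Lemma~\ref{lem:2} (the non-trivial merges coincide) together with the remark in the Preliminaries that once all remaining overlaps are empty, the order of the trivial concatenations does not affect the length. Your write-up simply makes this bookkeeping explicit via formula~\eqref{length}, splitting the overlap sum at $t = T_A$, which is the same decomposition the paper relies on.
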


    \section{Equivalence of instantiations}
    
        \begin{theorem}
            \label{theo:one}
            If some instantiation $A$ of GA achieves $\lambda$-approximation, then so does any other instantiation.
        \end{theorem}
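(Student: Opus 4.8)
The plan is to fix an arbitrary dataset $\mathcal{S}$ and an arbitrary instantiation $B \in \mathrm{GA}$, and to show directly that $|B(\mathcal{S})| \le \lambda\,|\mathrm{OPT}(\mathcal{S})|$. First I would run $B$ on $\mathcal{S}$, record its sequence of merges, and apply the Disturbing Procedure with $B$ playing the role of the reference instantiation and with a parameter $m$ to be sent to infinity; write $T_B$ for the step of its first trivial merge. By Lemma~\ref{lem:2} and Corollary~\ref{cor}, on the disturbed dataset $\mathcal{S}'$ every instantiation produces a superstring of the same length, so in particular $|A(\mathcal{S}')| = |B(\mathcal{S}')|$. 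Since $A$ is $\lambda$-approximate, $|A(\mathcal{S}')| \le \lambda\,|\mathrm{OPT}(\mathcal{S}')|$. Thus everything reduces to two length comparisons: expressing $|B(\mathcal{S}')|$ through $|B(\mathcal{S})|$, and bounding $|\mathrm{OPT}(\mathcal{S}')|$ through $|\mathrm{OPT}(\mathcal{S})|$.

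For the first comparison I would compute $|B(\mathcal{S}')|$ exactly from~\eqref{length}. The superstring $B(\mathcal{S}')$ is a concatenation of the chains built by the $T_B-1$ non-trivial merges, so the only overlaps subtracted are those of these merges. For the merge performed at step $t$, with left string $s_{l(t)}$ and right string $s_{r(t)}$, we have $\beta_{l(t)} = t$ and $\alpha_{r(t)} = t$, so Lemma~\ref{lem:1}.1 gives $|\overlap(s'_{l(t)}, s'_{r(t)})| = (m+1)k_t + T_B - 2t$, where $k_t$ is the original overlap. Summing this together with $\sum_i |s'_i| = (m+1)\sum_i|s_i| - \sum_i(\alpha_i+\beta_i) + nT_B$ and comparing with~\eqref{length} for $\mathcal{S}$, all terms carrying the $k_t$ collect into $(m+1)|B(\mathcal{S})|$, leaving a remainder that depends only on $n$ and $T_B$. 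Hence $|B(\mathcal{S}')| = (m+1)|B(\mathcal{S})| + C$ with $|C|$ bounded independently of $m$.

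For the second comparison I would feed the optimal permutation $\sigma^\ast$ of $\mathcal{S}$ into $\mathcal{S}'$, which yields one particular superstring of $\mathcal{S}'$ and hence an upper bound on $|\mathrm{OPT}(\mathcal{S}')|$. Its length is again read off from~\eqref{length}: each pair that overlapped in $\mathcal{S}$ scales to $(m+1)k - \alpha_j + T_B - \beta_i$ by Lemma~\ref{lem:1}.1, while each pair with overlap $0$ in $\mathcal{S}$ contributes, by Lemma~\ref{lem:1}.2, at most $\min\{T_B-\beta_i,\, m-\alpha_j\} \le T_B \le n$. The leading parts assemble into $(m+1)|\mathrm{OPT}(\mathcal{S})|$, and all corrections, including these spurious small overlaps on originally non-overlapping pairs, are bounded by a polynomial in $n$ uniformly in $m$, giving $|\mathrm{OPT}(\mathcal{S}')| \le (m+1)|\mathrm{OPT}(\mathcal{S})| + D$ with $|D|$ bounded independently of $m$. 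I expect this step to be the main obstacle, precisely because one must verify that disturbing never manufactures a large overlap between strings that did not overlap originally; Lemma~\ref{lem:1} is exactly what keeps such contributions at the $O(n)$ scale rather than the $O(m)$ scale, so that they are swallowed by the limit.

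Finally I would chain the four facts,
\begin{gather*}
(m+1)|B(\mathcal{S})| + C = |B(\mathcal{S}')| = |A(\mathcal{S}')| \le \lambda\,|\mathrm{OPT}(\mathcal{S}')| \le \lambda\big((m+1)|\mathrm{OPT}(\mathcal{S})| + D\big),
\end{gather*}
divide by $m+1$, and let $m \to \infty$. Since $C$ and $D$ stay bounded while $|B(\mathcal{S})|$ and $|\mathrm{OPT}(\mathcal{S})|$ are integers not depending on $m$, the correction $(\lambda D - C)/(m+1)$ vanishes and we are left with $|B(\mathcal{S})| \le \lambda\,|\mathrm{OPT}(\mathcal{S})|$. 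As $\mathcal{S}$ and $B$ were arbitrary, every instantiation is $\lambda$-approximate.
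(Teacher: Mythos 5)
Your proposal is correct and follows essentially the same route as the paper: disturb $\mathcal{S}$ using $B$'s own run as the reference, invoke Lemma~\ref{lem:2} and Corollary~\ref{cor} to get $|A(\mathcal{S}')| = |B(\mathcal{S}')|$, and let $m \to \infty$ to transfer the approximation guarantee back to $\mathcal{S}$. The only differences are presentational --- you argue directly with explicit affine bounds $(m+1)|B(\mathcal{S})| + C$ and $(m+1)|\mathrm{OPT}(\mathcal{S})| + D$, where the paper argues by contradiction using the limits $|B(\mathcal{S}')|/m \to |B(\mathcal{S})|$ and $|\mathrm{OPT}(\mathcal{S}')|/m \to |\mathrm{OPT}(\mathcal{S})|$ --- so the underlying mechanism is identical.
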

        \begin{proof}
            Assume the opposite and consider $B \in \mathrm{GA}$ as well as a dataset $\mathcal{S}$ such that $|B(\mathcal{S})| > \lambda|\mathrm{OPT}(\mathcal{S})|$. Let $\mathcal{S}' = \mathcal{S}'(B, m)$ be the corresponding disturbed dataset, where $m \geq n$ will be specified later.
            
            Note that $|s'_i| / m \to |s_i|$ and $|\overlap(s'_i, s'_j)| / m \to |\overlap(s_i, s_j)|$ as $m$ approaches infinity, thanks to Lemma~\ref{lem:1}.1--2. Then $|\mathrm{OPT}(\mathcal{S}')| / m \to |\mathrm{OPT}(\mathcal{S})|$, since
            \begin{align*}
                \frac 1m |\mathrm{OPT}(\mathcal{S}')| =
                &\frac 1m \min_\sigma \left\{ \sum_{i=1}^n |s'_i| - \sum_{i=1}^{n-1}|\overlap(s'_{\sigma(i)}, s'_{\sigma(i+1)})|\right\} \to \\
                \to &\min_\sigma \left\{ \sum_{i=1}^n |s_i| - \sum_{i=1}^{n-1}|\overlap(s_{\sigma(i)}, s_{\sigma(i+1)})|\right\} = |\mathrm{OPT}(\mathcal{S})|,
            \end{align*}
            $|B(\mathcal{S}')|/m \to |B(\mathcal{S})|$ and hence $|A(\mathcal{S}')|/m \to |B(\mathcal{S})|$, by Corollary~\ref{cor}.
            
            As $|B(\mathcal{S})| - \lambda|\mathrm{OPT}(\mathcal{S})| > 0$, we can choose $m$ so that $|B(\mathcal{S}')| - \lambda|\mathrm{OPT}(\mathcal{S}')|$ as well as $|A(\mathcal{S}')| - \lambda|\mathrm{OPT}(\mathcal{S}')|$ are positive. Hence $A$ is not $\lambda$-approximate.
        \end{proof}
    
        \begin{corollary}
            To prove (or disprove) the Greedy Conjecture, it is sufficient to consider datasets satisfying one of the following three properties:
            \begin{alphaenumerate}
                \item there are no ties between non-empty overlaps, that is, datasets where all the~instantiations of the~greedy algortihm work the~same;
                \item there are no empty overlaps: $\overlap(s_i, s_j) \neq \varepsilon$, $\forall\,i\neq j$;
                \item all overlaps are (pairwise) different: $|\overlap(s_i, s_j)| \neq |\overlap(s_k, s_l)|$, for all $i\neq j$, $k \neq l$, $(i, j) \neq (k, l)$.
            \end{alphaenumerate}
        \end{corollary}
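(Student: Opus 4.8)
The plan is to read the corollary as three independent reductions and to obtain each one by feeding a suitable variant of the Disturbing Procedure into the limiting argument of Theorem~\ref{theo:one}. By that theorem all instantiations of GA share a single approximation factor $\lambda$, so the Greedy Conjecture asserts $\lambda=2$, i.e.\ $|A(\mathcal S)|\le 2\,|\mathrm{OPT}(\mathcal S)|$ for every dataset $\mathcal S$ and every $A\in\mathrm{GA}$. For a fixed property it therefore suffices to exhibit, from an arbitrary dataset $\mathcal S$ and an instantiation $B$, a family of datasets $\mathcal S^{(m)}$ \emph{each} enjoying that property and satisfying $|A(\mathcal S^{(m)})|/m\to|B(\mathcal S)|$ and $|\mathrm{OPT}(\mathcal S^{(m)})|/m\to|\mathrm{OPT}(\mathcal S)|$ as $m\to\infty$; the proof of Theorem~\ref{theo:one} then turns any violation $|B(\mathcal S)|>\lambda|\mathrm{OPT}(\mathcal S)|$ into a violation on some $\mathcal S^{(m)}$, while the reverse inclusion is trivial. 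Hence the conjecture is equivalent to its restriction to datasets with that property.

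For (a) I would simply take $\mathcal S^{(m)}=\mathcal S'(B,m)$: by Lemma~\ref{lem:2} and Corollary~\ref{cor} every instantiation produces a superstring of the same length on the disturbed dataset, which is exactly ``all instantiations work the same'', and the needed limits are those already computed inside the proof of Theorem~\ref{theo:one}. For (b) I would run the same construction but keep one extra sentinel at the end of every string, replacing the trailing block $\$^{T_A-\beta_i}$ by $\$^{T_A-\beta_i+1}$. Each string then begins with $\$^{m-\alpha_i}$ (nonempty, as $\alpha_i\le T_A<m$) and ends with at least one sentinel, so every ordered pair shares the one-symbol overlap $\$$ and no overlap is empty; the change perturbs every length and every overlap by $O(1)$ and leaves the ordering of the genuine $\Theta(m)$-long overlaps intact, so Lemmas~\ref{lem:1}--\ref{lem:2} and the limiting estimates go through verbatim.

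Property (c) is the hard one, and the obstacle is precisely the pairs with empty original overlap: by Lemma~\ref{lem:1}.2 their disturbed overlap is a sentinel run of length $\min\{T_A-\beta_i,\,m-\alpha_j\}$, and a minimum of two quantities cannot be made injective in $(i,j)$ by any choice of the sentinel counts, so these overlaps are doomed to collide. My plan is to dissolve this obstacle by first reducing to property (b) and only then breaking ties: once \emph{all} overlaps are nonempty, every disturbed overlap is \emph{additive} rather than a minimum, of the form $(m+1)k-\alpha_j+T_A-\beta_i$. Applying a second, fingerprinted disturbing to the property-(b) dataset, with leading and trailing counts $L_j=m-C\alpha_j+g_j$ and $R_i=C(T_A-\beta_i)+f_i$, where $C$ exceeds every value $f_i+g_j$ and $g_j=j$, $f_i=in$ make $(i,j)\mapsto f_i+g_j$ injective, the overlap in a band of fixed original length $k$ becomes $(m+1)k+CT_A-C(\alpha_j+\beta_i)+f_i+g_j$. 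Because at each step $t$ every competing available pair has $\alpha_j+\beta_i$ strictly larger than the intended merge's value $2t$ and $C$ dominates all fingerprints, the scaled term still selects the correct greedy merge (so Lemma~\ref{lem:2} survives), while the injective fingerprint separates all pairs sharing the same $\alpha_j+\beta_i$; choosing $m\gg CT_A$ keeps distinct original lengths in distinct bands. Thus every overlap length becomes distinct, giving (c).

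The step I expect to be genuinely nonroutine is the one just described: recognizing that the empty-overlap pairs force a min-structure that no sentinel bookkeeping can separate, and that the cure is to compose the (b)-reduction with a fingerprinted disturbing so that all overlaps are simultaneously nonempty and additive. Everything else---the three limiting estimates, and the verification that the cosmetic modifications of the Disturbing Procedure preserve Lemmas~\ref{lem:1}--\ref{lem:2}---is a direct rerun of the arguments already in the paper.
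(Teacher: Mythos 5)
Your general framework and your constructions for (a) and (b) are the paper's own: for (a) the paper likewise takes $\mathcal{S}'(B,m)$ itself, and for (b) it likewise appends a single $\$$ to every string of $\mathcal{S}'$. One cosmetic inaccuracy: for (b) you say Lemmas~\ref{lem:1}--\ref{lem:2} go through ``verbatim'', but the statement $T_A=T'_A=T'_B$ of Lemma~\ref{lem:2} actually fails after the append (no trivial merges remain, so $T'_A=T'_B=n-1$); the paper instead notes that the first $T_A-1$ merges remain forced and every later overlap has length exactly one, so all instantiations still return the same length and the limits are unaffected. Your argument needs this same observation, since the ties among the length-one overlaps are not covered by ``the ordering of the $\Theta(m)$-long overlaps is intact''.

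Part (c) is where you genuinely depart from the paper, and your route is the sound one. The paper's proof of (c) is a one-step modification of the disturbing procedure (trailing block $\$^{n(T_A-\beta_i)}$ instead of $\$^{T_A-\beta_i}$) together with the claim that $\alpha_j+n\beta_i\neq\alpha_k+n\beta_l$ whenever $(i,j)\neq(k,l)$. That claim is false in general, because several strings may share $\alpha=T_A$ and several may share $\beta=T_A$: for $\mathcal{S}=\{ab,ba,cd,dc\}$, where a greedy run merges $(ab,ba)$, then $(cd,dc)$, and has $T_A=3$, the pairs $(ba,ab)$ and $(dc,cd)$ both receive disturbed overlap length $m-2$, and the pairs $(ba,cd)$ and $(dc,ab)$ both receive empty disturbed overlaps; so the paper's modified $\mathcal{S}'$ violates (c), and the paper's argument never addresses the $\min$-structure of Lemma~\ref{lem:1}.2 at all. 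Your diagnosis of this obstruction (a $2\times 2$ rectangle of empty-overlap pairs makes $\min\{R_i,L_j\}$ non-injective for every choice of sentinel counts) is exactly right, and your fix works: after first passing to the property-(b) dataset every overlap is additive, each string is a left part at most once and a right part at most once with no trivial merges, and your three-scale separation ($m$-bands for the overlap length $k$, the factor $C$ separating distinct values of $\alpha_j+\beta_i$, and the injective fingerprint $f_i+g_j=in+j$ below that) makes all overlap lengths pairwise distinct, while the forced-merge argument of Lemma~\ref{lem:2} survives because at step $t$ every competing pair has $\alpha_j+\beta_i>2t$ and the fingerprints are smaller than $C$. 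The price is a fresh sentinel and a two-parameter limit (choose the (b)-parameter first, then your $m$), so a violation is transferred in two steps rather than one; the gain is that, unlike the paper's one-step construction, your construction actually handles the empty-overlap pairs, which is unavoidable since property (c) tolerates at most one pair of overlap length zero.
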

        \begin{claimproof}
            \begin{alphaenumerate}
                \item Follows directly from the proof of Theorem~\ref{theo:one}, as we always can use the dataset $\mathcal{S}'$ instead of $\mathcal{S}$.
                \item Append $\$$ to each string of $\mathcal{S}'$. Then, every two strings have non-empty overlap that at least contains $\$$, and in general $T_A = T'_A = T'_B$ from Lemma~\ref{lem:2} does not hold ($T'_A$ and $T'_B$ are always $n-1$). However, the first $T_A$ merges are still the same and after them all the remaining strings have overlaps of length 1 and then the lengths of the final solutions are the same as well.
                \item Append $\$^{n(T_A - \beta_i)}$ to each string of $\mathcal{S}'$ instead of $\$^{T_A - \beta_i}$. Then
                \begin{align*}
                    |\overlap(s'_i, s'_j)| = (m + 1)k - \alpha_j + nT_A - n\beta_i,
                \end{align*}
                provided $m$ is large enough, and $\alpha_j + n\beta_i \neq \alpha_k + n\beta_l$ if $(i, j) \neq (k, l)$. Repeating the proofs of Lemmas~\ref{lem:1} and~\ref{lem:2} with this version of $\mathcal{S}'$, we obtain this statement of the lemma.
            \end{alphaenumerate}
        \end{claimproof}
    
    \section{Frequency version of SCS}
    
        Consider the following problem: given a dataset $\mathcal{S}$ and a symbol $\#$, find the superstring with the smallest number of occurrences of $\#$. We call the symbol $\#$ \emph{important}.
        
        This problem is similar to SCS: here, we also need to find the ``shortest'' superstring, but the length of the string is understood in terms of the number of occurrences of the important symbol.
        
        By $|s|_\#$ we denote the ``length'', that is, the number of occurences of $\#$ in a string $s$, and all the remaining entities related to this problem we denote by this subscript. As before, the~length$_\#$ of a superstring $s(\sigma)$ obtained from some permutation $\sigma$ is
        \begin{gather}
            |s(\sigma)|_\# = \sum_{i=1}^n |s_i|_\# - \sum_{i=1}^{n-1}|\overlap(s_{\sigma(i)}, s_{\sigma(i+1)})|_\#.
        \end{gather}
        $\mathrm{GA}_\#$ also works as GA, but instead of merging pairs of strings with longest overlaps, it merges pairs with longest$_\#$ ones.
        
        A natural question arises: is this problem easier than SCS? The following theorem claims
        that these problems are almost equivalent.
        \begin{theorem}
            \label{theo:equi}
            \begin{alphaenumerate}
                \item If there is a $\lambda$-approximate $A \in \mathrm{GA}_\#$, then $\mathrm{GA}$ is $\lambda$-approximate.
                \item If $\mathrm{GA}$ is $\lambda$-approximate, then all the instantiations of $\mathrm{GA}_\#$, that merge the longest overlaps among the longest$_\#$ ones, are $\lambda$-approximate.
            \end{alphaenumerate}
        \end{theorem}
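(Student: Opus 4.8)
The plan is to prove both directions by exhibiting length- and overlap-preserving transformations between the two problems and then transferring the greedy runs and the optima across them. For (a) I reduce ordinary SCS to the frequency version, and for (b) I reduce in the opposite direction; Theorem~\ref{theo:one} supplies the final upgrade in the first direction.

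For (a), given an SCS instance $\mathcal{S}$, fix a fresh important symbol $\#$ and replace every string $s_i = c_1\cdots c_{n_i}$ by $R(s_i) = \# c_1 \# c_2 \cdots \# c_{n_i}$. A short computation shows $|R(s_i)|_\# = |s_i|$ and, crucially, $\overlap(R(s_i), R(s_j)) = R(\overlap(s_i, s_j))$, so $|\overlap(R(s_i),R(s_j))|_\# = |\overlap(s_i,s_j)|$; the homomorphism $R$ commutes with merges and preserves the substring-free property, since every occurrence of $R(s_i)$ inside $R(s_j)$ must begin at a $\#$. Consequently the frequency-length of any superstring of $R(\mathcal{S})$ equals the ordinary length of the corresponding superstring of $\mathcal{S}$; a fixed $A \in \mathrm{GA}_\#$ run on $R(\mathcal{S})$ induces a fixed instantiation $\tilde A \in \mathrm{GA}$ on $\mathcal{S}$ (longest$_\#$ overlaps correspond exactly to longest overlaps), with $|A(R(\mathcal{S}))|_\# = |\tilde A(\mathcal{S})|$ and $|\mathrm{OPT}_\#(R(\mathcal{S}))|_\# = |\mathrm{OPT}(\mathcal{S})|$. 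Thus $\lambda$-approximation of $A$ yields a single $\lambda$-approximate instantiation $\tilde A$ of $\mathrm{GA}$, and Theorem~\ref{theo:one} upgrades this to \emph{all} instantiations.

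For (b) I reduce in the opposite direction by making the important symbol \emph{heavy}: fix a parameter $M$ (eventually $M \to \infty$) and let $T$ be the homomorphism replacing each $\#$ by $\#^M$ and fixing every other symbol, so $|T(s)| = M|s|_\# + (|s| - |s|_\#)$. The key lemma I would establish is the exact overlap identity $\overlap(T(s_i),T(s_j)) = T(\overlap(s_i,s_j))$, hence $|\overlap(T(s_i),T(s_j))| = M|\overlap(s_i,s_j)|_\# + (|\overlap(s_i,s_j)| - |\overlap(s_i,s_j)|_\#)$. Granting this, $T$ commutes with merges and $|T(s(\sigma))| = M|s(\sigma)|_\# + (|s(\sigma)| - |s(\sigma)|_\#)$ for every permutation $\sigma$. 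Since the non-$\#$ contribution is bounded by the total input length while the $\#$-contribution is scaled by $M$, for $M$ larger than that length the SCS objective on $T(\mathcal{S})$ orders both overlaps and permutations lexicographically — first by $\#$-count, then by non-$\#$ count. Hence $\mathrm{OPT}(T(\mathcal{S}))$ realizes $\mathrm{OPT}_\#(\mathcal{S})$ up to an additive term independent of $M$, and any $A \in \mathrm{GA}$ on $T(\mathcal{S})$ is exactly an instantiation $A'$ of $\mathrm{GA}_\#$ that breaks ties among longest$_\#$ overlaps by preferring the longest actual overlap, i.e.\ precisely the instantiations in the statement. Writing $|A(T(\mathcal{S}))| = M|A'(\mathcal{S})|_\# + O(1)$ and $|\mathrm{OPT}(T(\mathcal{S}))| = M|\mathrm{OPT}_\#(\mathcal{S})|_\# + O(1)$ with $M$-independent remainders, the hypothesis $|A(T(\mathcal{S}))| \le \lambda|\mathrm{OPT}(T(\mathcal{S}))|$, divided by $M$ and sent to the limit, yields $|A'(\mathcal{S})|_\# \le \lambda|\mathrm{OPT}_\#(\mathcal{S})|_\#$ for every admissible $A'$.

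I expect the main obstacle to be the overlap identity for $T$ in part (b): a suffix of $T(s_i)$ matching a prefix of $T(s_j)$ could consume only part of a blown-up $\#$-run, creating a match not visibly of the form $T(y)$, and one must rule out that such a partial-run match exceeds $T(\overlap(s_i,s_j))$. The argument I have in mind splits on whether the transformed overlap contains a non-$\#$ symbol: if it does, its leftmost non-$\#$ symbol forces the match to start exactly at the leading $\#$-run boundary of $T(s_j)$ and to be $M$-aligned inside a run of $T(s_i)$, so the match descends to a genuine common suffix/prefix $y'$ with $\overlap(T(s_i),T(s_j)) = T(y')$, whence maximality forces $y' = \overlap(s_i,s_j)$; if it does not, the match is a pure power of $\#$ capped at $M\min(p_i,q_j)$, where $p_i$ and $q_j$ are the trailing and leading $\#$-run lengths of $s_i$ and $s_j$, which again equals $T(\overlap(s_i,s_j))$. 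It remains to check the routine points that both $R$ and $T$ preserve the substring-free assumption for $M$ large enough, so that the length formula~(\ref{length}) applies throughout.
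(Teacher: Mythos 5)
Your proposal is correct and follows essentially the same route as the paper: part (a) uses the identical transformation (insert the important symbol before every character of every string) followed by an appeal to Theorem~\ref{theo:one}, and part (b) uses the identical blow-up $\#\mapsto\#^M$ together with the same overlap-scaling identity $|\overlap(T(s),T(t))| = M|\overlap(s,t)|_\# + |\overlap(s,t)| - |\overlap(s,t)|_\#$. The only cosmetic differences are that you argue (b) directly, transferring the approximation bound to every admissible instantiation by dividing by $M$, where the paper runs the contrapositive (a bad greedy$_\#$ run on $\mathcal{S}$ becomes a bad greedy run on $\mathcal{S}_m$, contradicting Theorem~\ref{theo:one}), and that you sketch a proof of the exact overlap identity, which the paper asserts without proof.
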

        \begin{proof}
            \begin{alphaenumerate}
                \item Let $\mathcal{S}$ be a dataset for SCS and let $\$$ be a symbol that does not occur in the strings of $\mathcal{S}$. Transform strings of $\mathcal{S}$ by adding $\$$ before every symbol (for example, \texttt{abc} turns into $\mathtt{\$a\$b\$c}$), denote the resulting dataset by $\mathcal{S}^\$$ and consider it as a dataset for SCS$_\#$ with $\# = \$$. Clearly, $|s^\$|_\# = |s|$ and $|\overlap(s^\$, t^\$)|_\# = |\overlap(s, t)|$ for every $s \neq t$. Thus, every superstring for $\mathcal{S}$ of length $k$ corresponds to the superstring for $\mathcal{S}^\$$ of length$_\#$~$k$. Therefore, every approximate (greedy$_\#$ or not) algorithm for SCS$_\#$ induces such an algorithm (greedy$_\#$ or not, respectively) for SCS and if there is a $\lambda$-approximate instantiation of~GA$_\#$, then there is the corresponding $\lambda$-approximate instantiation of GA and by Theorem~\ref{theo:one} the entire GA is $\lambda$-approximate.
                
                \item  Assume the opposite and consider $A \in \mathrm{GA}_\#$ satisfying the property from the statement of the theorem, and a dataset $\mathcal{S}$ such that $|A(\mathcal{S})|_\# > \lambda |\mathrm{OPT}_\#(\mathcal{S})|_\#$. We construct a~new dataset $\mathcal{S}_m$ by replacing every occurence of $\#$ in the strings of $\mathcal{S}$ with~$\#^m$.
                
                Let $(l(1), r(1)), \dots, (l(n-1), r(n-1))$ be the greedy$_\#$ order or merges produced by~$A$. This order satisfies the following property: if $i < j$, then $|\overlap(s_{l(i)}, s_{r(i)})|_\# \geq |\overlap(s_{l(j)}, s_{r(j)})|_\#$, and if $|\overlap(s_{l(i)}, s_{r(i)})|_\# = |\overlap(s_{l(j)}, s_{r(j)})|_\#$, then also
                $|\overlap(s_{l(i)}, s_{r(i)})| \geq |\overlap(s_{l(j)}, s_{r(j)})|$.
                
                The key idea behind the proof is that for $m \gg n$ this order is greedy. Indeed,
                \begin{gather*}
                    |\overlap(s_m, t_m)| = |\overlap(s, t)| + (m - 1) |\overlap(s, t)|_\#,
                \end{gather*}
                so if $|\overlap(s, t)|_\# > |\overlap(u, v)|_\#$, then $|\overlap(s_m, t_m)| > |\overlap(u_m, v_m)|$ for large~$m$, and if $|\overlap(s, t)|_\# = |\overlap(u, v)|_\#$ and $|\overlap(s, t)| \geq |\overlap(u, v)|$, then $|\overlap(s_m, t_m)| \geq |\overlap(u_m, v_m)|$ for any~$m$. Hence the greedy$_\#$ order for $\mathcal{S}$ becomes the greedy order for $\mathcal{S}_m$ and the solution $A(\mathcal{S})_m$ obtained from $A(\mathcal{S})$ by replacing all the occurrences of $\#$ with $\#^m$ becomes the greedy solution for $\mathcal{S}_m$. Note that $A(\mathcal{S}_m)$ need not to be equal to $A(\mathcal{S})_m$.
                
                Since $|\mathrm{OPT}_\#(\mathcal{S}_m)| / m \to |\mathrm{OPT}_\#(\mathcal{S})|_\#$ and $|A(\mathcal{S})_m| / m \to |A(\mathcal{S})|_\#$, we may choose $m$ such that $|A(\mathcal{S})_m| - \lambda |\mathrm{OPT}_\#(\mathcal{S}_m)| > 0$. As then $|A(\mathcal{S})_m| - \lambda |\mathrm{OPT}(\mathcal{S}_m)| > 0$ as well, we obtain the greedy solution that is not $\lambda$-approximate. Then, Theorem~\ref{theo:one} implies that GA is not $\lambda$-approximate.
            \end{alphaenumerate}
        \end{proof}        
        
    \section{Further directions}
    
        The proposed connection between lengths and symbol frequencies reveals a potential frequency structure of SCS: may we think about this problem in terms of frequencies rather than lengths? We know that there is a correspondence between greedy and greedy$_\#$ superstrings, but are the greedy solutions 2-approximate$_\#$ and vice versa? May we treat the greedy solutions not only as reasonably short ones, but also as solutions that contain reasonably small number of~occurrences of each symbol uniformly?
        
    \section*{Acknowledgments}
    
        Many thanks to Alexander Kulikov for valuable discussions and proofreading the text.

    \end{document}